%% file: acl_latex.tex
\documentclass[11pt]{article}

\usepackage[final]{acl}

\usepackage{times}
\usepackage{latexsym}
\usepackage{amssymb} 
\usepackage[T1]{fontenc}
\usepackage[utf8]{inputenc}

\usepackage{microtype}

\usepackage{inconsolata}

\usepackage{graphicx}
\usepackage{algorithm}
\usepackage{algpseudocode}
\usepackage{textcomp}
\usepackage{xcolor}
\usepackage{amsthm}
\usepackage{amsmath}
\usepackage[caption=false,font=footnotesize]{subfig}
\usepackage{booktabs}
\newtheorem{proposition}{Proposition}
\title{RouterWise: Deployment-Aware Prompt Routing for Multi-Model LLMs

}

\author{
Hossein Hosseini Kasnavieh$^{1,2}$ \quad
Gholamreza Haffari$^{3}$ \quad
Chris Leckie$^{2}$ \quad
Adel N. Toosi$^{1,2}$ \\
$^{1}$DisNet Lab, School of Computing and Information Systems, The University of Melbourne, Australia \\
$^{2}$School of Computing and Information Systems, The University of Melbourne, Australia \\
$^{3}$Department of Data Science \& AI, Monash University, Australia \\
\texttt{hossein.hosseini@student.unimelb.edu.au} \\
\texttt{\{caleckie, adel.toosi\}@unimelb.edu.au} \\
\texttt{gholamreza.haffari@monash.edu}
}

\begin{document}
\maketitle

\input{abstract}
\input{introduction}
\input{preliminaries}
\input{problem_formulation}
\input{methodology}
\input{Evaluation}
\input{conclusion}

\input{limitation}
% Bibliography entries for the entire Anthology, followed by custom entries
%\bibliography{anthology,custom}
% Custom bibliography entries only
\bibliography{custom}

\input{appendix}

\end{document}

%% file: abstract.tex
\begin{abstract}
Multi-model LLM routing aims to direct each prompt to a model
whose capability matches its difficulty, balancing quality against
cost and latency. Existing routers are trained from prompts alone
and treat each candidate model as a black box with fixed
per-request latency. In real deployments, however, latency is not
a fixed property of a model: it depends on how the model is
deployed, its allocated compute and memory, and on the load induced by the routing policy
itself. The same model can be the right choice for a prompt under
one deployment and the wrong choice under another, and a router
calibrated on fixed latencies can produce assignments that are
slow, infeasible, or simply suboptimal once deployed.
We propose \textsc{RouterWise}, a deployment-aware router that
conditions routing decisions on the model's deployment
configuration. Given a pool of models, a GPU cluster, and a
latency SLO, \textsc{RouterWise} jointly selects a deployment
configuration and a routing policy that maximize output quality
under the SLO, using a dual-price formulation that reduces
score-maximizing routing to per-prompt decisions and
setup-specific latency models learned from system profiling.
Experiments on three open-source LLMs show that achievable routing
quality varies by up to 87\% across deployment configurations on
the same GPU cluster under the same SLO, demonstrating that the
optimal model for a prompt depends not only on the prompt itself,
but on the deployment environment.
\end{abstract}

%% file: introduction.tex
\section{Introduction}
Large Language Models (LLMs) have demonstrated remarkable
capabilities across reasoning, coding, and multi-step problem
solving~\cite{hetis, gllm}, and are now available in a wide range
of sizes from both proprietary providers~\cite{OpenAI2025,
anthropic2024} and the open-source community~\cite{Llama4_2025,
yang2025qwen3}. These models differ substantially in quality,
cost, and latency, motivating a central question in LLM serving:
how should providers trade output quality against the cost and
latency of producing it, given that user prompts themselves vary
widely in complexity? Prompt routing addresses this question by
dynamically selecting an appropriate model for each input,
sending simpler prompts to smaller, faster models and escalating
harder queries to larger ones, navigating the quality--cost
frontier on a per-prompt basis~\cite{best-route, zooter}.

\begin{figure}[t]
    \centering
    \subfloat[2 GPUs]{
        \includegraphics[width=0.46\columnwidth]{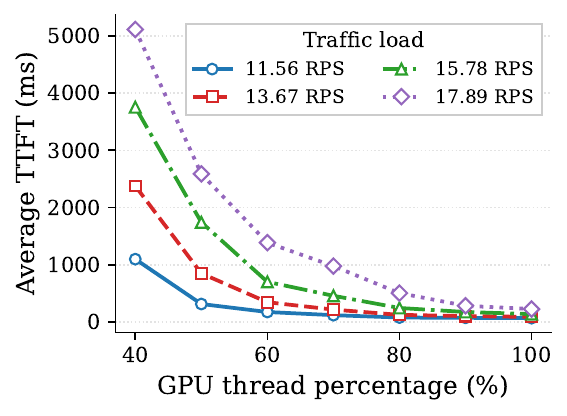}
        \label{fig:subfig1}
    }
    \hfill
    \subfloat[4 GPUs]{
        \includegraphics[width=0.46\columnwidth]{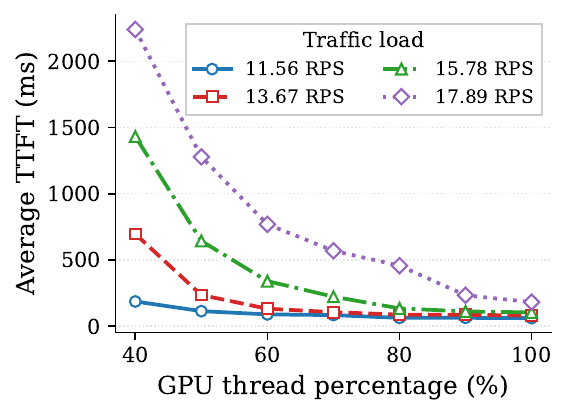}
        \label{fig:subfig2}
    }
\caption{Average TTFT of Yi-34B (vLLM, the largest model in our
pool) versus GPU thread allocation under varying traffic loads,
sharded across 2 and 4 GPUs. Latency varies by over an order of
magnitude across configurations at the same load, showing that
per-model latency depends jointly on resource allocation and
induced load.}
    \label{fig:thread}
\end{figure}

Existing routers~\cite{chenfrugalgpt, zooter, ongroutellm,
best-route, hybridllm, confidence, kasnavieh2026introlm} are
trained from prompts alone and treat each candidate model as if
it were a stateless API: a black box with fixed per-request cost
and latency, independent of deployment configuration and
workload~\cite{hybridllm, ongroutellm}. In real deployments,
however, models are not stateless APIs. GPU budgets are limited,
and when models are co-located on shared GPUs to enable
fine-grained allocation of compute and memory~\cite{muxserve},
each model's latency becomes a non-linear function of its
allocated resources and of the request load induced by the
routing policy itself (Figure~\ref{fig:thread}). The same model
can be the right choice for a prompt under one deployment and the
wrong choice under another, and a router optimized against
fixed latencies can produce assignments that are slow,
infeasible, or simply suboptimal once deployed. Routing and
deployment are therefore \emph{tightly coupled}: the resources
allocated to a model determine its latency, while the routing
policy determines its load and output quality.

In this work, we study routing and resource allocation as a single
joint problem. Given a GPU cluster, a pool of models,
and a latency service-level objective (SLO), our goal is to
determine both (i) the per-model resource allocation, including GPU thread
share, memory, and tensor parallelism level, and (ii) a routing policy
that assigns prompts to models, so as to maximize output quality
subject to the SLO. We propose \textsc{RouterWise}, which searches
over feasible deployment setups and, for each, computes the best
routing policy under the SLO. The routing subproblem is solved via
a \emph{dual-price} formulation that reduces score-maximizing
routing under a load budget to simple per-prompt decisions, while
setup-specific latency models learned from system profiling
capture how each model behaves under different allocations and
loads. Our key contributions are:
(1) we identify a gap in existing routing methods: they assume fixed per-model latency, which fails under self-hosted deployment;
(2) we propose \textsc{RouterWise}, a deployment-aware router that conditions routing decisions on the model's deployment configuration via a dual-price formulation and profiled latency models;
(3) we show empirically that achievable routing quality varies by up to 87\% across deployment configurations on the same GPU cluster, demonstrating that routing and deployment cannot be optimized independently.

%% file: preliminaries.tex
\section{Preliminaries}
\subsection{GPU Sharing Model}
LLMs are typically deployed across multiple GPUs using tensor
parallelism (TP), which partitions model weights into shards, one
per GPU~\cite{xiang2025aegaeon}. The simplest strategy is to
dedicate a full GPU to each shard (Fig.~\ref{fig:colocation}(a)),
but in multi-model serving this can leave some GPUs underutilized
while others become bottlenecks, since different models face different request loads and resource demands. We therefore allow
multiple shards to share a GPU, enabling fine-grained allocation
of compute and memory across co-located models and shifting
resources toward those with higher load.

We enable compute sharing via NVIDIA MPS~\cite{nvidia_mps}, which
schedules CUDA kernels from co-located shards concurrently on the
same GPU and lets us cap the fraction of compute resources (active
threads) each shard can use. GPU memory is partitioned explicitly
across co-located models. Figure~\ref{fig:thread} shows that
increasing the thread cap reduces TTFT under various loads,
indicating that latency is highly sensitive to the compute share
assigned to each model.

\begin{figure}[t]
\includegraphics[width=\columnwidth]{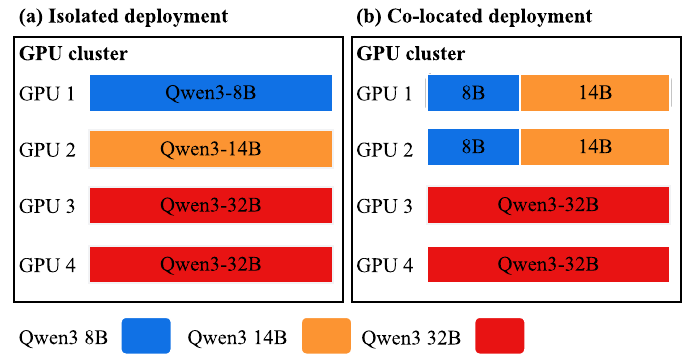}
\caption{Resource allocation strategies in multi-model LLM serving. Models can be deployed in isolation or co-located on shared GPUs. Co-location enables flexible compute and memory resource sharing and improves utilization.}
\label{fig:colocation}
\end{figure}

\subsection{Routing Model}
\label{subsection:routing_model}
We assume access to a lightweight router that, for each input
prompt $x$, outputs a score vector
\[
s(x) = (s_1(x), \dots, s_M(x)) \in [0,1]^M,
\]
where $s_i(x)$ is the predicted output quality of model $i$ on
$x$. We instantiate the router as a DeBERTa-v3-Large
encoder~\cite{he2020deberta} with a multi-head regression layer on
the \texttt{[CLS]} representation, one head per candidate
model, trained on the RouterBench~\cite{hu2024routerbench}. The
router supplies the per-prompt utility estimates used by our
framework, but is not itself the focus of this work; our
formulation only assumes access to per-model quality estimates
and is agnostic to the router architecture.

Figure~\ref{fig:scores} shows the predicted score distributions on
RouterBench. Larger models tend to score higher on average, but
the distributions overlap substantially: a non-trivial fraction of
prompts can be served effectively by smaller models. This
heterogeneity is what makes routing worthwhile---simpler prompts
can be handled by smaller models, with larger models reserved for
prompts that genuinely benefit from their capability. We use ``score'' and ``output quality'' interchangeably
throughout: $s_i(x)$ is the router's estimate of model $i$'s
output quality on $x$, calibrated against RouterBench's quality
labels.

\subsection{System Setup}
We consider a multi-model LLM serving system on a GPU cluster with
$G$ GPUs and a pool of $M$ models, indexed by $i \in \{1, \dots,
M\}$, serving an incoming workload at global arrival rate
$\lambda$ (requests per second). Each model $i$ is assigned a
\emph{setup}
\[
\theta_i = (\mathcal{G}_i, \mathrm{tp}_i, \rho_i) \in \Theta_i,
\]
where $\mathcal{G}_i \subseteq \{1, \dots, G\}$ is the set of
assigned GPUs, $\mathrm{tp}_i$ is the tensor parallelism level,
and $\rho_i \in (0, 1]$ is the fraction of GPU compute (thread
percentage) allocated to model $i$. GPU memory for each shard is
determined by profiling (see $m_i(\mathrm{tp}_i)$ below). A \emph{system setup}
$\Theta = (\theta_1, \dots, \theta_M) \in \mathcal{S}$ collects
the per-model setups, where $\mathcal{S}$ contains all
configurations that satisfy deployability constraints: GPU memory
capacity, tensor-parallel placement, and the requirement that all
shards fit on the available GPUs. For each model $i$ and parallelism
level $\mathrm{tp}_i$, we obtain $m_i(\mathrm{tp}_i)$, the minimum
memory fraction required by one shard, from profiling; this is
used to check shard placement feasibility.

\begin{figure}[t]
\centering
\includegraphics[width=0.8\columnwidth]{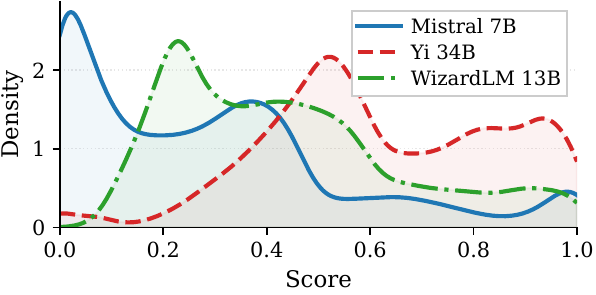}
\caption{Predicted routing scores on RouterBench. Larger models
score higher on average, but the distributions overlap
substantially, many prompts can be served effectively by smaller
models.}
\label{fig:scores}
\end{figure}

A routing policy $\pi: \mathcal{X} \rightarrow \{1, \dots, M\}$
maps each prompt $x$, drawn from a workload distribution
$\mathcal{D}$, to a deployed model. Our goal is to jointly choose
a system setup and a routing policy that maximize output quality
subject to a latency target. The two decisions are interdependent: a model's setup determines its latency and throughput
under load, while the routing policy determines the load each
model receives.

%% file: problem_formulation.tex
\section{Problem Formulation}
\label{sec:framework}
We formulate the joint resource-allocation-and-routing problem.
Let $\Theta \in \mathcal{S}$ denote a feasible system setup. A
routing policy makes prompt-level decisions, but over a workload
distribution it induces an aggregate traffic split that we capture
by a routing fraction vector
\begin{equation}
\label{eq:simplex}
w \in \Delta_M \triangleq \left\{ w \in \mathbb{R}_{\ge 0}^M \;\middle|\; \sum_{i=1}^M w_i = 1 \right\},
\end{equation}
where $w_i$ is the fraction of prompts routed to model $i$. This
workload-level view lets us characterize routing through two
quantities: the maximum achievable average score under $w$, and
the resulting latency under setup $\Theta$.

\paragraph{Maximum achievable score.}
Let $\mathcal{X} = \{x_1, \dots, x_N\}$ be a finite sample from
the workload distribution and $s_i(x_j) \in [0,1]$ the predicted
score of model $i$ on prompt $x_j$. Given $w$, the maximum average
score achievable subject to model $i$ serving a fraction $w_i$ of
the workload is
\begin{equation}
\label{eq:score_primal}
\hat{S}(w) = \frac{1}{N} \max_{\{z_{ji}\}}
\sum_{j=1}^{N}\sum_{i=1}^{M} z_{ji}\, s_i(x_j),
\end{equation}
subject to
\begin{align}
\label{eq:score_count}
z_{ji} \in \{0,1\}, \quad
\sum_{i=1}^{M} z_{ji} = 1 \;\forall j, \quad
\sum_{j=1}^{N} z_{ji} = c_i \;\forall i,
\end{align}
where $z_{ji} = 1$ indicates that prompt $x_j$ is assigned to
model $i$, and $c_i \approx N w_i$ is the target count for model
$i$ (with $\sum_i c_i = N$).

\paragraph{Latency under setup and routing.}
Under routing fractions $w$ and global arrival rate $\lambda$,
model $i$ receives load $\lambda_i = \lambda w_i$. We define
\begin{equation}
L(\Theta, w) = \sum_{i=1}^{M} w_i\, \ell_i(\theta_i, \lambda w_i),
\end{equation}
where $\ell_i(\theta_i, \lambda_i)$ is the average latency of
model $i$ under its setup $\theta_i$ and load $\lambda_i$. Depending on the
SLO, $\ell_i$ may denote end-to-end latency, time to first token
(TTFT), or time per output token (TPOT).

\begin{figure*}[t]
    \centering
    \subfloat[Mistral 7B\label{fig:ttft_mistral}]{
        \includegraphics[width=0.3\textwidth]{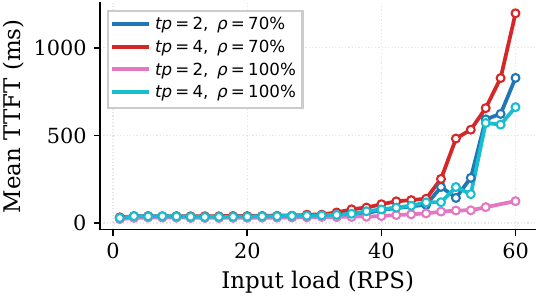}
    }
    \hfill
    \subfloat[WizardLM 13B\label{fig:ttft_vicuna}]{
        \includegraphics[width=0.3\textwidth]{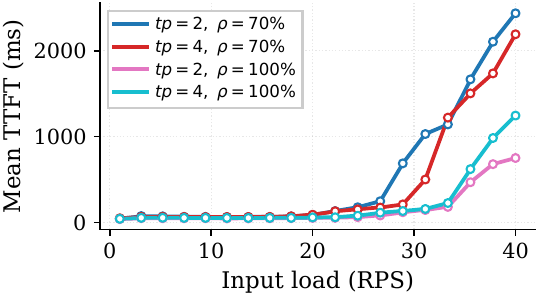}
    }
    \hfill
    \subfloat[Yi 34B\label{fig:ttft_yi}]{
        \includegraphics[width=0.3\textwidth]{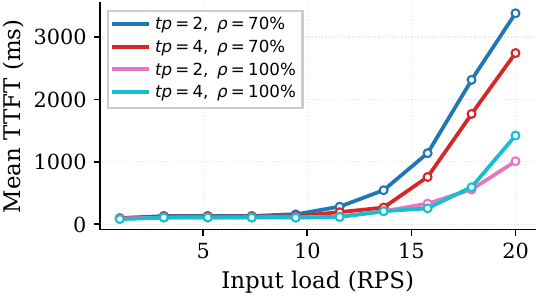}
    }
    \caption{Average TTFT versus input load across tensor
    parallelism levels and GPU compute shares ($\rho$) for different models in RouterBench obtained by profiling. The
    latency-load relationship varies substantially across both
    model size and configuration.}
    \label{fig:latency_profile}
\end{figure*}

\paragraph{Joint optimization.}
The joint problem is
\begin{equation}
\label{optimization_formula}
\max_{\Theta \in \mathcal{S},\, w \in \Delta_M} \hat{S}(w)
\qquad \text{s.t. } L(\Theta, w) \le \tau,
\end{equation}
where $\tau$ is the latency target. We focus on average-latency SLOs in this work; extending the
formulation to tail-latency targets (P95, P99) is a natural
direction for future work.

%% file: methodology.tex
\section{RouterWise}
\label{sec:solution}

This section presents \textsc{RouterWise} in four parts: a
dual-price formulation of the score function
(\S\ref{subsection:dual_score}), setup-specific latency models
from profiling (\S\ref{subsection:latency_model}), the
optimization procedure that combines them
(\S\ref{subsection:procedure}), and the deployment of the
resulting plan on the cluster (\S\ref{subsec:deployment}).

\subsection{Dual-Price Formulation of \texorpdfstring{$\hat{S}(w)$}{S-hat(w)}}
\label{subsection:dual_score}

\begin{proposition}[Dual-price characterization]
\label{prop:dual_score}
Let $\alpha_i \in \mathbb{R}$ be the dual variable for the count
constraint $\sum_j z_{ji} = c_i$ in
Eqs.~\eqref{eq:score_primal}--\eqref{eq:score_count}. Dualizing
these constraints yields $\min_{\alpha \in \mathbb{R}^M} g(\alpha)$, with
\begin{equation}
\label{dual_obj}
g(\alpha) = \frac{1}{N}\!\left[\sum_{j=1}^{N}\max_{i}\bigl(s_i(x_j)-\alpha_i\bigr) + \sum_{i=1}^{M}\alpha_i c_i\right]\!.
\end{equation}
For any fixed $\alpha$, the inner maximization decomposes across
prompts and is solved by assigning each $x_j$ to
\begin{equation}
\label{eq:assignment_rule}
i^*(x_j;\alpha) = \arg\max_{i \in \{1,\dots,M\}}\bigl(s_i(x_j)-\alpha_i\bigr).
\end{equation}
\end{proposition}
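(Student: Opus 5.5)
We form the Lagrangian of the integer program \eqref{eq:score_primal}--\eqref{eq:score_count}, keeping the per-prompt constraints $z_{ji}\in\{0,1\}$ and $\sum_i z_{ji}=1$ as the feasible set $\mathcal{Z}$ and dualizing only the count constraints $\sum_j z_{ji}=c_i$ with multipliers $\alpha_i$. Concretely, we define
\[
\mathcal{L}(z,\alpha)=\frac{1}{N}\Bigl[\sum_{j,i} z_{ji}\,s_i(x_j) - \sum_i \alpha_i\Bigl(\sum_j z_{ji}-c_i\Bigr)\Bigr].
\]
Rearranging gives
\[
\mathcal{L}(z,\alpha)=\frac{1}{N}\Bigl[\sum_j\sum_i z_{ji}\bigl(s_i(x_j)-\alpha_i\bigr)+\sum_i\alpha_i c_i\Bigr].
\]
Since $\alpha$ is unrestricted in sign (the constraints are equalities), the dual function is $g(\alpha)=\max_{z\in\mathcal{Z}}\mathcal{L}(z,\alpha)$ and the dual problem is $\min_{\alpha\in\mathbb{R}^M} g(\alpha)$.

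\textbf{Step 1: the inner maximization separates across prompts.} The term $\sum_i\alpha_i c_i$ does not depend on $z$. The remaining sum is a sum over $j$ of terms that each involve only row $z_{j\cdot}$, and $\mathcal{Z}$ is a Cartesian product of one-hot sets. Hence the maximum over $z\in\mathcal{Z}$ equals the sum over $j$ of the row-wise maxima.

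\textbf{Step 2: solve each row.} Maximizing $\sum_i z_{ji}(s_i(x_j)-\alpha_i)$ over one-hot vectors picks the largest coefficient, so its value is $\max_i(s_i(x_j)-\alpha_i)$. The maximizer sets $z_{ji}=1$ exactly at $i^*(x_j;\alpha)$ as in \eqref{eq:assignment_rule}, with ties broken arbitrarily. Substituting these row maxima gives exactly \eqref{dual_obj}.

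\textbf{Step 3: record weak duality.} For any $z$ feasible for the primal, the penalty term vanishes, so $\mathcal{L}(z,\alpha)$ equals the primal objective. Therefore $\hat S(w)\le g(\alpha)$ for every $\alpha$, which justifies $\min_\alpha g$ as the dual. The proposition itself only claims this dual form and the decomposition, so no strong-duality argument is required.

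\textbf{Main obstacle: strong duality, if one wants it.} Equality $\hat S(w)=\min_\alpha g(\alpha)$ would need a separate argument. The plan there is to note that the constraint matrix of \eqref{eq:score_count} is that of a bipartite transportation problem, so it is totally unimodular. The LP relaxation then has an integral optimum, and LP strong duality applies. Once this is established, the only remaining subtlety is ties in the argmax, which may be needed to meet the counts $c_i$ exactly.
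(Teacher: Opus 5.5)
Your proof is correct and follows the paper's argument essentially step for step. You dualize only the count constraints, rearrange so that $\sum_i \alpha_i c_i$ is constant in $z$, decompose the inner maximization across prompts because the remaining constraints act row by row, and take the per-prompt argmax of $s_i(x_j)-\alpha_i$. Your weak-duality step and your total-unimodularity sketch of strong duality go beyond the paper's proof and are worth keeping. The equality $\hat S(w)=\min_\alpha g(\alpha)$ is exactly what the paper later invokes without proof, attributing it to this proposition, when deriving Proposition~\ref{prop:dual_gradient}.
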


\noindent\textit{Proof.} See Appendix~\ref{appendix:dual_score}. \hfill $\square$

The $\alpha_i$ admit a natural interpretation as model-specific
\emph{prices}: increasing $\alpha_i$ makes model $i$ less
attractive to all prompts; decreasing it makes the model more
attractive. The dual thus reduces the constrained assignment over
$N$ prompts to $N$ independent per-prompt argmax decisions
governed by these prices.
\begin{proposition}[Convexity]
\label{prop:dual_convex}
$g(\alpha)$ is convex in $\alpha \in \mathbb{R}^M$.
\end{proposition}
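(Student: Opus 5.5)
We prove Proposition~\ref{prop:dual_convex} directly from the explicit form of $g$ in Eq.~\eqref{dual_obj}. We split $g$ into two parts, show each is convex, and use that convexity survives nonnegative scaling and sums. The decomposition is
\[
g(\alpha) = \frac{1}{N}\sum_{j=1}^{N} h_j(\alpha) + \frac{1}{N}\sum_{i=1}^{M}\alpha_i c_i,
\qquad
h_j(\alpha) \triangleq \max_{i}\bigl(s_i(x_j)-\alpha_i\bigr).
\]

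\textbf{Linear term.} The map $\alpha \mapsto \sum_i \alpha_i c_i$ is linear in $\alpha$, so it is convex.

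\textbf{Max terms.} Fix a prompt $x_j$. For each $i$, the map $\alpha \mapsto s_i(x_j) - \alpha_i$ is affine in $\alpha$, since $s_i(x_j)$ is a constant that does not depend on $\alpha$. So $h_j$ is a pointwise maximum of finitely many affine functions. To verify its convexity, take $\alpha,\beta\in\mathbb{R}^M$ and $t\in[0,1]$. For every $i$,
\[
s_i(x_j) - \bigl(t\alpha_i + (1-t)\beta_i\bigr) = t\bigl(s_i(x_j)-\alpha_i\bigr) + (1-t)\bigl(s_i(x_j)-\beta_i\bigr) \le t\,h_j(\alpha) + (1-t)\,h_j(\beta).
\]
Taking the maximum over $i$ on the left gives $h_j(t\alpha+(1-t)\beta)\le t\,h_j(\alpha)+(1-t)\,h_j(\beta)$. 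Hence $h_j$ is convex; it is in fact piecewise linear.

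\textbf{Conclusion.} $g$ is the sum of $N$ convex functions and one linear function, scaled by $1/N>0$. Nonnegative combinations of convex functions are convex, so $g$ is convex on $\mathbb{R}^M$.

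There is no real obstacle in this argument; the only step needing care is the pointwise-max inequality above. One could instead observe that $g$ is a Lagrangian dual function, which is always convex because it is a supremum of functions affine in $\alpha$. The direct argument is preferable because it does not depend on how the Lagrangian in Proposition~\ref{prop:dual_score} was arranged.

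We also note that $g$ is not differentiable wherever the argmax in Eq.~\eqref{eq:assignment_rule} has ties. Subgradients are still available: $\frac{1}{N}\bigl(c_i - \#\{j : i^*(x_j;\alpha)=i\}\bigr)$ is a subgradient with respect to $\alpha_i$. This justifies the subgradient-based minimization of $g$ used later.
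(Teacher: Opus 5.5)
Your proof is correct and follows the same route as the paper: write $g$ as $\frac{1}{N}$ times the sum of the per-prompt functions $h_j$, each a pointwise maximum of affine functions, plus a linear term, then use that nonnegative combinations of convex functions are convex. The main difference is that you prove the pointwise-max inequality explicitly instead of citing it, and your subgradient remark duplicates what the paper states in the main text.
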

\noindent\textit{Proof.} See Appendix~\ref{appendix:dual_convex}. \hfill $\square$

By Proposition~\ref{prop:dual_convex}, $g$ can be minimized via
subgradient descent. Let
$n_i(\alpha) = \sum_j \mathbf{1}[i^*(x_j;\alpha) = i]$
denote the prompt count assigned to model $i$ under $\alpha$. A
subgradient of $g$ with respect to $\alpha_i$ is
$(c_i - n_i(\alpha))/N$, giving the update
\begin{equation}
\alpha_i^{(t+1)} = \alpha_i^{(t)} + \eta_t \frac{n_i(\alpha^{(t)}) - c_i}{N}.
\end{equation}
Intuitively, over-assigned models have their price raised, and
under-assigned models have it lowered.

\subsection{Setup-Specific Latency Models}
\label{subsection:latency_model}

Latency depends not only on input load $\lambda$ but also on the
serving configuration, tensor parallelism ($\mathrm{tp}$) and GPU
compute share ($\rho$), and the dependence is highly
non-uniform. Figure~\ref{fig:latency_profile} shows that the
latency-load relationship varies substantially across both
configurations and model families: for Mistral 7B, for instance,
$\mathrm{tp}=2$ yields lower TTFT than $\mathrm{tp}=4$ over a wide
range of loads, while the opposite holds for other models. A
single unified latency predictor is therefore unlikely to
generalize accurately across this space.

\textsc{RouterWise} instead adopts a setup-specific strategy. We
discretize the setup space (in our experiments, $\mathrm{tp} \in
\{1,2,4\}$ and $\rho$ on a discrete grid), profile each
(model, setup) pair under multiple input loads, and approximate
$\ell_i(\theta_i, \lambda_i)$ as a piecewise-linear interpolation
between profiled points. Given a system setup $\Theta$, the
overall latency is then
\begin{equation}
\label{eq:latency_sum}
L(\Theta, w) = \sum_{i=1}^{M} w_i\, \ell_i(\theta_i, \lambda w_i).
\end{equation}

We assume $\ell_i$ depends only on $\theta_i$ and $\lambda_i$,
treating MPS thread caps and explicit memory partitioning as
sufficient to mitigate co-location interference. In practice,
runtime latency on shared clusters fluctuates due to residual
interference, workload burstiness, and infrastructure noise; we
therefore treat $\ell_i$ as a planning model rather than a
steady-state guarantee, and \textsc{RouterWise} relies on the
relative ranking of setups by latency, which we find is preserved
in deployment even when absolute values shift.

\subsection{Optimization Procedure}
\label{subsection:procedure}

\textsc{RouterWise} solves the joint problem in Eq.~(\ref{optimization_formula})
through a nested structure: an outer search over feasible system
setups, and for each setup an inner score-maximizing routing solver
that handles the latency constraint via Lagrangian relaxation \cite{lagrangian}. The
relaxation introduces a multiplier $\beta \ge 0$ that penalizes
latency violations; we bisect on $\beta$ to find the operating
point that meets the latency target $\tau$.

\paragraph{Stage 1: Routing fractions for fixed $(\Theta, \beta)$.}
For fixed $\Theta$ and $\beta \ge 0$, \textsc{RouterWise} solves
\begin{equation}
\max_{w \in \Delta_M} \mathcal{J}(\Theta, w; \beta) \triangleq \hat{S}(w) - \beta\bigl(L(\Theta, w) - \tau\bigr)
\end{equation}
via projected gradient ascent on the simplex. The score gradient
is given by the dual prices (Proposition~\ref{prop:dual_gradient}
below), and the latency gradient is computed from the slopes of
the piecewise-linear curves in Eq.~\eqref{eq:latency_sum}:
\begin{equation}
\frac{\partial L(\Theta, w)}{\partial w_i}
= \ell_i(\theta_i, \lambda w_i) + \lambda w_i \frac{\partial \ell_i}{\partial \lambda_i}\bigg|_{\lambda_i = \lambda w_i}.
\end{equation}
Combining the two,
$\nabla_w \mathcal{J} = \alpha^\star - \beta \nabla_w L(\Theta, w)$,
and the update is
$w^{(t+1)} = \Pi_{\Delta_M}\!\bigl(w^{(t)} + \eta_t \nabla_w \mathcal{J}\bigr)$,
where $\Pi_{\Delta_M}$ is the simplex
projection~\cite{duchi2008efficient}. This is summarized in
Algorithm~\ref{alg:optimize_fractions}.

\begin{proposition}[Score gradient via dual prices]
\label{prop:dual_gradient}
Under the continuous approximation $c_i = N w_i$, if $\hat{S}(w)$
is differentiable at $w$ and $\alpha^\star$ is an optimal dual
solution, then $\nabla_w \hat{S}(w) = \alpha^\star$.
\end{proposition}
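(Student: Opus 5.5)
The plan is to treat $\hat{S}$ as the value function of a linear program parametrized by its right-hand side, and to read off its gradient from LP duality as a sensitivity result. Under the continuous approximation $c_i = N w_i$, relax $z_{ji}\in\{0,1\}$ to $z_{ji}\in[0,1]$. The constraint matrix of Eq.~\eqref{eq:score_count} is the incidence matrix of a bipartite transportation problem (prompts versus models), so it is totally unimodular. Hence for integral $c$ the relaxation has the same value as the integer program. For general $c=Nw$ we simply define $\hat{S}(w)$ through the relaxed LP; this is the natural meaning of the continuous approximation. In this relaxed form,
\[
N\hat{S}(w) = \max\Bigl\{\textstyle\sum_{j,i} z_{ji}s_i(x_j) : z\ge 0,\ \sum_i z_{ji}=1,\ \sum_j z_{ji}=Nw_i\Bigr\}.
\]
The upper bound $z_{ji}\le 1$ is implied by the row constraints, so it can be dropped.

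The first step is to write the full LP dual. It has variables $u_j$ for the prompt constraints and $\alpha_i$ for the count constraints. The dual is to minimize $\sum_j u_j + \sum_i \alpha_i N w_i$ subject to $u_j+\alpha_i\ge s_i(x_j)$. Eliminating $u_j$ optimally gives $u_j=\max_i(s_i(x_j)-\alpha_i)$, which reproduces exactly $N g(\alpha)$ from Proposition~\ref{prop:dual_score}. Strong duality holds because the primal is feasible (for $w\in\Delta_M$) and bounded. Therefore $\hat{S}(w)=\min_\alpha g(\alpha;w)$, where I make the dependence on $w$ explicit: $g(\alpha;w)=\frac1N\sum_j\max_i(s_i(x_j)-\alpha_i)+\sum_i\alpha_i w_i$.

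The second step is a supergradient argument. Fix an optimal $\alpha^\star$ at $w$. For any $w'\in\Delta_M$, dual feasibility of $\alpha^\star$ (which does not depend on $w$) gives
\[
\hat{S}(w')\le g(\alpha^\star;w') = g(\alpha^\star;w) + \langle\alpha^\star, w'-w\rangle = \hat{S}(w)+\langle \alpha^\star,w'-w\rangle .
\]
So $\hat{S}$ is concave, as a minimum of affine functions of $w$, and $\alpha^\star$ is a supergradient. If $\hat{S}$ is differentiable at $w$, the superdifferential of a concave function is the singleton $\{\nabla\hat{S}(w)\}$, so $\nabla_w\hat{S}(w)=\alpha^\star$.

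The main obstacle is the domain: $\hat{S}$ lives on the simplex, so the gradient is only determined modulo the all-ones direction. Shifting $\alpha\mapsto\alpha+t\mathbf 1$ changes $g$ by $-t+t\sum_i w_i=0$, so dual optima are likewise only defined up to this shift. I would handle this by extending $\hat{S}$ to $\{w\ge 0:\sum_i w_i=1\}$ within its affine hull and stating the equality for tangent directions $\sum_i v_i=0$. Alternatively, one can normalize $\alpha^\star$. Either way this is harmless for the projected-gradient update, since $\Pi_{\Delta_M}$ is invariant to adding multiples of $\mathbf 1$.

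If differentiability is asserted at a boundary point ($w_i=0$), I would note that supergradients there form a cone-augmented set. The hypothesis of differentiability should then be read relative to the simplex.
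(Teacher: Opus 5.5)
Your proof is correct and is essentially the paper's argument. The paper writes $V(c)=\min_{\alpha} g(\alpha;c)$, applies the envelope theorem (using that $g$ depends on $c$ only through the linear term $\frac{1}{N}\sum_i \alpha_i c_i$), and then uses the chain rule with $c_i=Nw_i$; your inequality $\hat S(w')\le \hat S(w)+\langle\alpha^\star,w'-w\rangle$, tight at $w'=w$, is precisely the proof of that envelope step carried out directly in $w$, and it is more careful than the paper's at two points: you justify the zero duality gap via the LP relaxation, total unimodularity and LP strong duality (the paper attributes it to Proposition~\ref{prop:dual_score}, which only derives the dual function), and you correctly note that $\alpha^\star$ is determined only up to shifts $\alpha^\star+t\mathbf{1}$ and $\hat S$ is finite only on $\Delta_M$, so $\nabla_w\hat S(w)=\alpha^\star$ holds only along tangent directions $\sum_i v_i=0$ or after normalizing $\alpha^\star$.
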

\noindent\textit{Proof.} See Appendix~\ref{appendix:dual_gradient}. \hfill $\square$

\begin{algorithm}[t]
\small
\caption{\textsc{OptimizeFractions}$(\Theta, \beta)$}
\label{alg:optimize_fractions}
\begin{algorithmic}[1]
\Require Setup $\Theta$, multiplier $\beta$, initial $w^{(0)} \in \Delta_M$, step sizes $\{\eta_t\}$, iterations $T$
\For{$t = 0, \dots, T-1$}
    \State Solve dual for $\hat{S}(w)$ to obtain $\alpha^\star$
    \State $g \gets \alpha^\star - \beta \nabla_w L(\Theta, w)$
    \State $w \gets \Pi_{\Delta_M}(w + \eta_t g)$
\EndFor
\State \Return $w$
\end{algorithmic}
\end{algorithm}

\paragraph{Stage 2: Bisection over $\beta$.}
Different values of $\beta$ trace out the score-latency frontier:
smaller $\beta$ favors score, larger $\beta$ favors lower
latency. \textsc{RouterWise} bisects on $\beta$ to find the
operating point that meets the latency target $\tau$
(Algorithm~\ref{alg:optimize_beta}).

\begin{algorithm}[t]
\small
\caption{\textsc{OptimizeBeta}$(\Theta)$}
\label{alg:optimize_beta}
\begin{algorithmic}[1]
\Require Setup $\Theta$, target $\tau$, bounds $\beta_{\min}, \beta_{\max}$, tolerance $\varepsilon$
\State $(w^\star, \beta^\star) \gets (\varnothing, \varnothing)$
\While{$\beta_{\max} - \beta_{\min} > \varepsilon$}
    \State $\beta \gets (\beta_{\min} + \beta_{\max})/2$
    \State $w \gets \textsc{OptimizeFractions}(\Theta, \beta)$
    \If{$L(\Theta, w) > \tau$}
        \State $\beta_{\min} \gets \beta$
    \Else
        \State $\beta_{\max} \gets \beta$; $(w^\star, \beta^\star) \gets (w, \beta)$
    \EndIf
\EndWhile
\State \Return $(w^\star, \beta^\star)$
\end{algorithmic}
\end{algorithm}

\paragraph{Stage 3: Setup search.}
We discretize each model's setup over candidate tensor parallelism
levels and GPU thread percentages, and form $\mathcal{S}$ as the
Cartesian product of these per-model choices.
\textsc{RouterWise} enumerates candidates from $\mathcal{S}$,
filtering each through two checks. We additionally introduce a utilization threshold $\rho_{\min}
\in [0, 1]$ that discards setups leaving the cluster
under-utilized. First, the total compute
demand $C(\Theta) = \sum_i \mathrm{tp}_i \rho_i$ must satisfy
$G\rho_{\min} \le C(\Theta) \le G$, discarding setups that
under-utilize or over-subscribe the cluster. Second, the induced
shards must be placeable on $G$ GPUs under per-shard memory
$m_i(\mathrm{tp}_i)$, with no two shards of the same model
co-located; we check this via first-fit decreasing. For each
surviving setup, \textsc{RouterWise} runs \textsc{OptimizeBeta}
and retains the setup with the highest achievable score
(Algorithm~\ref{alg:select_setup}).

\subsection{Deployment}
\label{subsec:deployment}
Because \textsc{RouterWise}'s optimization is driven by profiled
latency curves, while runtime latency can fluctuate due to network
congestion, neighboring workloads on shared clusters, and other
sources of noise, we treat the optimization output as a deployment
plan rather than a steady-state guarantee. Given the
optimization output $(\Theta^\star, w^\star)$, each model $i$ is
launched as a separate vLLM process with tensor parallelism
$\mathrm{tp}_i$ across its assigned GPUs $\mathcal{G}_i$, with
its compute share capped at $\rho_i$ via NVIDIA MPS
(\texttt{CUDA\_MPS\_ACTIVE\_THREAD\_PERCENTAGE}). The router is
deployed alongside the model processes: at request time, it
computes scores $s(x)$ for the incoming prompt $x$, the dual-price
rule $i^\star(x; \alpha^\star) = \arg\max_i (s_i(x) - \alpha_i^\star)$
selects a model, and the prompt is forwarded to the corresponding
vLLM process for generation.

\paragraph{Complexity.}
The cost of \textsc{RouterWise} is dominated by the setup search.
Let $K_i$ be the number of discretized configuration choices for
model $i$; naive enumeration of $\mathcal{S}$ scales as
$\prod_i K_i$, which grows quickly with $M$. In practice, the
compute-budget and deployability checks prune most candidates:
let $\widetilde{\mathcal{S}} \subseteq \mathcal{S}$ denote the
retained set. The total cost is
$O(|\widetilde{\mathcal{S}}| \cdot T_\beta \cdot
C_{\textsc{OptimizeFractions}})$, where $T_\beta$ is the number
of bisection steps in \textsc{OptimizeBeta}. As shown in
Figure~\ref{fig:retained_setups}, $|\widetilde{\mathcal{S}}|$
remains modest in our deployments, peaking around 700 for three
models and 17{,}000 for four models on up to 8 GPUs, making the
search tractable even though most prior routing work considers
only two-model settings~\cite{ongroutellm, hybridllm,
chenfrugalgpt, best-route}.

\begin{figure}[t]
    \centering
    \begin{minipage}[t]{0.48\linewidth}
        \centering
        \includegraphics[width=\linewidth]{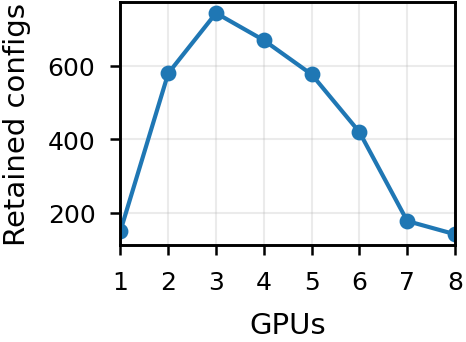}
        
        \vspace{2mm}
        \small (a) Three-model setting
    \end{minipage}
    \hfill
    \begin{minipage}[t]{0.48\linewidth}
        \centering
        \includegraphics[width=\linewidth]{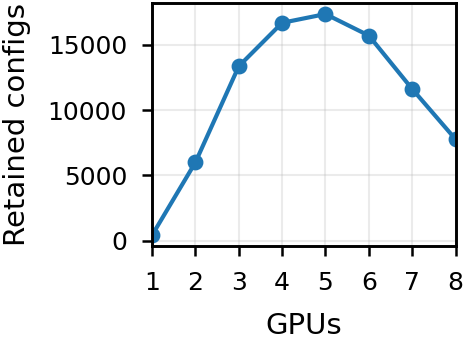}
        
        \vspace{2mm}
        \small (b) Four-model setting
    \end{minipage}
\caption{Number of retained setups as a function of cluster size,
for three- and four-model pools. Retained setups are those that
pass both the compute-utilization and deployability filters.}
    \label{fig:retained_setups}
\end{figure}

%% file: Evaluation.tex
\section{Evaluation}
\label{sec:evaluation}

We evaluate \textsc{RouterWise} along two complementary axes.
\textbf{Study 1} (\S\ref{subsec:study1}) asks whether resource
allocation matters for routing performance: holding the routing
policy fixed at \textsc{RouterWise}'s dual-price optimum, we
quantify how much achievable quality varies across feasible
deployment setups under the same SLO. \textbf{Study 2}
(\S\ref{subsec:study2}) evaluates  \textsc{RouterWise}  end-to-end against prior routing systems on a real GPU cluster. Both studies
share the experimental setup described next.

\begin{algorithm}[t]
\small
\caption{\textsc{SelectSetup}}
\label{alg:select_setup}
\begin{algorithmic}[1]
\Require Candidate setup space $\mathcal{S}$, GPUs $G$, latency target $\tau$, utilization threshold $\rho_{\min}$
\Ensure Best setup-routing pair $(\Theta^\star, w^\star)$
\State $(\Theta^\star, w^\star) \gets (\varnothing, \varnothing)$;\; $S^\star \gets -\infty$
\For{$\Theta = (\theta_1, \dots, \theta_M) \in \mathcal{S}$}
    \State $C(\Theta) \gets \sum_{i=1}^{M} \theta_i.\mathrm{tp}\,\theta_i.\rho$ \Comment{compute demand}
    \If{$C(\Theta) < G\,\rho_{\min}$ \textbf{or} $C(\Theta) > G$}
        \State \textbf{continue} \Comment{under-utilized or over-budget}
    \EndIf
\State $Q(\Theta) \gets [\,]$
\For{$i = 1, \dots, M$}
    \For{$r = 1, \dots, \theta_i.\mathrm{tp}$}
        \State append $m_i(\theta_i.\mathrm{tp})$ to $Q(\Theta)$
    \EndFor
\EndFor
    \If{\textbf{not} \textsc{FFD-Feasible}$(Q(\Theta), G)$}
        \State \textbf{continue} \Comment{shards do not fit}
    \EndIf
    \State $(w, \beta) \gets \textsc{OptimizeBeta}(\Theta)$
    \If{$L(\Theta, w) \le \tau$ \textbf{and} $\hat{S}(w) > S^\star$}
        \State $S^\star \gets \hat{S}(w)$;\; $(\Theta^\star, w^\star) \gets (\Theta, w)$
    \EndIf
\EndFor
\State \Return $(\Theta^\star, w^\star)$
\end{algorithmic}
\end{algorithm}

\subsection{Experimental Setup}
\label{subsec:exp_setup}

\paragraph{Cluster.} We evaluate \textsc{RouterWise} on GPU
clusters with $G \in \{4, 8\}$ NVIDIA A100 80GB PCIe GPUs, chosen
to study the method under both moderate and larger deployment
budgets.

\paragraph{Models and dataset.} While most prior routing work
focuses on routing between two
models~\cite{ongroutellm, zooter, hybridllm, automix, chenfrugalgpt},
we consider a more general multi-model setting with three models
spanning a range of sizes and capabilities:
\texttt{mistral-7b-chat}~\cite{Mistral3_2025},
\texttt{WizardLM-13B-V1.2}~\cite{xu2023wizardlm}, and
\texttt{Yi-34B-Chat}~\cite{young2024yi}. These models are drawn
from the RouterBench dataset~\cite{hu2024routerbench}, which
provides 36{,}000 prompts paired with model outputs and quality
scores derived from benchmark correctness criteria. Each model is
served with vLLM~\cite{vllm}, which is also used to construct the
setup-specific latency profiles.

    \begin{figure*}[t]
    \centering

    \subfloat[$G=4$: Maximum achievable score in different setups with 4 GPUs.%
    \label{fig:score_latency_g4}]{
        \begin{tabular}{cccc}
            \includegraphics[width=0.235\textwidth]{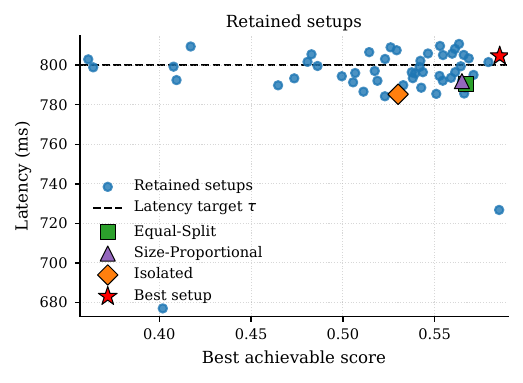 } &
            \includegraphics[width=0.235\textwidth]{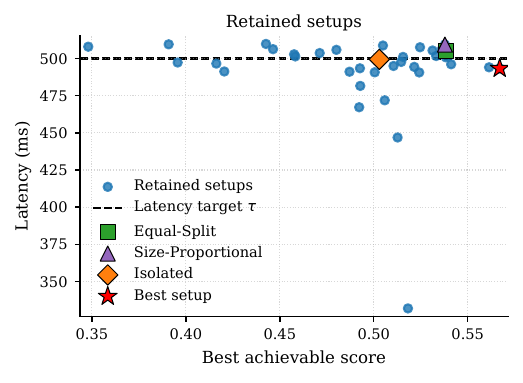} &
            \includegraphics[width=0.235\textwidth]{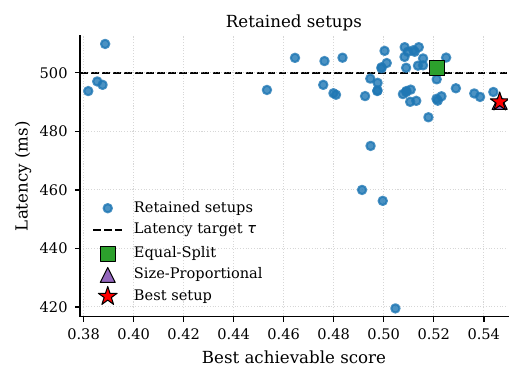} &
            \includegraphics[width=0.235\textwidth]{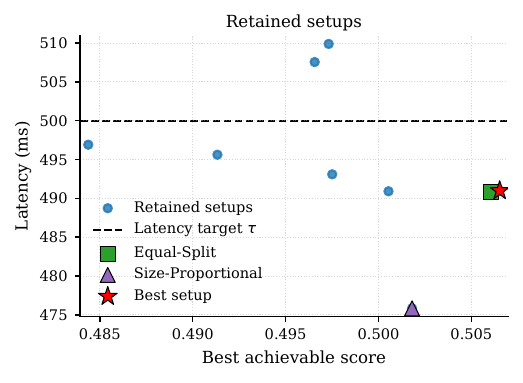} \\[-0.2em]
            {\footnotesize $\lambda=50$, $\tau=800$} &
            {\footnotesize $\lambda=60$, $\tau=500$} &
            {\footnotesize $\lambda=70$, $\tau=500$} &
            {\footnotesize $\lambda=80$, $\tau=500$}
        \end{tabular}
    }

    \vspace{0.6em}

    \subfloat[$G=8$: Maximum achievable score in different setups with 8 GPUs.%
    \label{fig:score_latency_g8}]{
        \begin{tabular}{cccc}
            \includegraphics[width=0.235\textwidth]{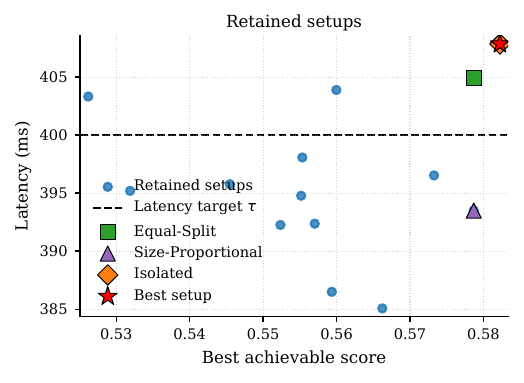} &
            \includegraphics[width=0.235\textwidth]{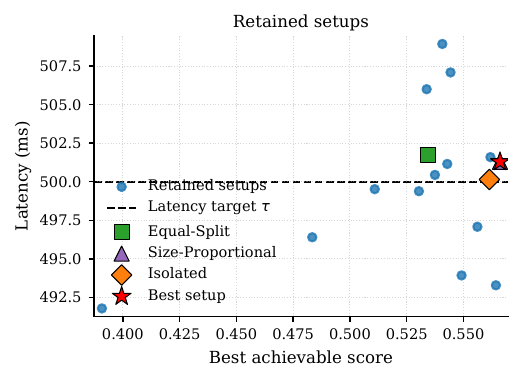} &
            \includegraphics[width=0.235\textwidth]{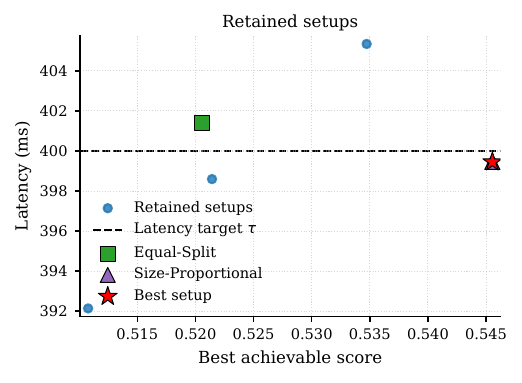} &
            \includegraphics[width=0.235\textwidth]{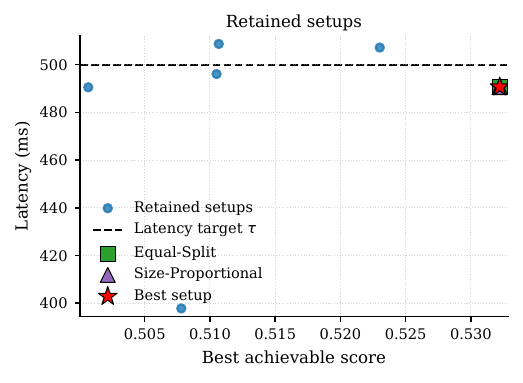} \\[-0.2em]
            {\footnotesize $\lambda=50$, $\tau=400$} &
            {\footnotesize $\lambda=60$, $\tau=500$} &
            {\footnotesize $\lambda=70$, $\tau=400$} &
            {\footnotesize $\lambda=80$, $\tau=500$}
        \end{tabular}
    }

    \caption{Score-latency scatter plots of retained setups under two settings of $G$, where retained setups denote deployable setups that also pass the compute-utilization filter. Row (a) shows the case $G=4$, and row (b) shows the case $G=8$.}
    \label{fig:score_latency_grid}
\end{figure*}
\paragraph{Router.} Following \S\ref{subsection:routing_model},
we train a multi-head DeBERTa-v3-Large~\cite{he2020deberta}
router on RouterBench, with one regression head per candidate
model. At inference, the router emits a score $s_i(x)$ per model
for each input $x$; these scores feed the dual-price assignment
rule $s_i(x) - \alpha_i$.

\paragraph{Setup space.} We discretize each model's tensor
parallelism level over $\{1, 2, 4\}$ and its GPU thread
percentage over $\{10, 20, \dots, 100\}$. Shards of the same
model may not be co-located on the same GPU.
\textsc{RouterWise} enumerates the Cartesian product of these
per-model choices and applies the feasibility checks from
\S\ref{subsection:procedure}.

Full hyperparameter values are listed in
Appendix~\ref{app:hyperparameters}.
\subsection{Study 1: Achievable Performance Across Different Setups}
\label{subsec:study1}

We first study how the deployment configuration alone shapes the
quality--latency trade-off available to a router. Holding the
routing policy fixed at \textsc{RouterWise}'s dual-price optimum,
we vary only the deployment setup and measure the achievable
score--latency operating point for each. For each
$(G, \lambda, \tau)$, we run the routing solver on every retained
setup (deployable, passes the compute-utilization filter) and
record the achieved score and latency. The resulting points
reveal the range of routing quality achievable under the same
SLO across different deployments, and thus how much the choice
of deployment, independent of the router, constrains what routing
can achieve.

\paragraph{Utilization filter.} We set $\rho_{\min} = 1$
throughout this study to avoid underutilizing the compute resources. Under NVIDIA MPS, the thread percentage
assigned to a process caps the fraction of GPU compute that the
corresponding shard can use; if the aggregate cap on a GPU is
below 100\%, part of its compute cannot be utilized concurrently
even when workload is available. Setting $\rho_{\min} = 1$
excludes such trivially under-utilized setups while preventing resource contention, so that any
observed score differences reflect \emph{how} compute and memory
are distributed across models, not whether compute is left idle.

\paragraph{Resource-allocation baselines.} Existing routing methods
treat per-model latency as fixed and therefore do not address
setup selection, so no prior routing method provides a direct
baseline for this study. We instead compare against three
allocation strategies, all instantiated on the same discretized
setup space as \textsc{RouterWise}:

\textbf{Equal-Split.} Each model is assigned a target compute
budget of $G/M$; we select the feasible setup whose per-model
allocations best match this equal-share target.

\textbf{Size-Proportional.} Each model $i$ is assigned a target
budget proportional to its parameter count $p_i$, i.e.,
$G p_i / \sum_k p_k$; we again select the closest feasible setup.

\textbf{Isolated.} Co-location is disabled and shards are placed
on dedicated GPUs whenever feasible, reflecting a conventional
deployment strategy that cannot benefit from fine-grained
sharing.

\paragraph{Results.}
Figure~\ref{fig:score_latency_grid} shows the score--latency
operating points of all retained setups, with the three
baselines and \textsc{RouterWise}'s selected setup highlighted.
A clear trend emerges across all $(G, \lambda, \tau)$ settings:
retained setups span a wide range of achievable scores under the
same SLO, with relative scores ranging from $0.30$ to $0.56$, a
spread of up to $87\%$ on the same cluster. This is despite
every plotted setup passing the utilization filter, so the
variation cannot be attributed to idle GPUs; it reflects
\emph{how} the available compute and memory are distributed
across models. Resource allocation is therefore not a deployment
detail but a first-order factor in routing performance, and the
three fixed-rule baselines fall short of the setup selected by
\textsc{RouterWise} in all settings we evaluate.

Comparing $G=4$ and $G=8$, we observe higher achievable scores at
$G=8$ as expected from the larger compute budget, and fewer
retained setups: with more resources available, the fraction of
discretized configurations that fully saturate the cluster under
$\rho_{\min}=1$ shrinks, leaving a smaller feasible set to search
and making the setup search correspondingly faster.

\begin{table}[t]
\centering
\small
\setlength{\tabcolsep}{4pt}
\begin{tabular}{llccc}
\toprule
& & \multicolumn{3}{c}{Score / Average TTFT (ms)} \\
\cmidrule(lr){3-5}
$\lambda$ & $\tau$ (ms) & \textsc{RouterWise} & FrugalGPT & RouteLLM \\
\midrule
50 & 800 & \textbf{0.59} / 860 & 0.59 / 1500 & 0.58 / 820 \\
60 & 500 & \textbf{0.57} / 550 & 0.57 / 2700 & 0.57 / 1100 \\
70 & 500 & \textbf{0.56 } / 565 & 0.56 / 12300 & 0.54 / 2500 \\
80 & 500 & \textbf{0.51} / 600 & 0.51 / 24200 & 0.51 / 7800 \\
\bottomrule
\end{tabular}
\caption{End-to-end deployment results on 4 A100 GPUs across
four offered loads ($\lambda$, RPS) under a shared latency target
$\tau$.}
\label{tab:study2_g4}
\end{table}

\subsection{Study 2: End-to-End Deployment}
\label{subsec:study2}

We evaluate \textsc{RouterWise} end-to-end against two
representative routing baselines on a real GPU cluster.

\paragraph{Baselines.}
We compare against two widely cited routers covering the two
dominant routing paradigms: \textbf{FrugalGPT}~\cite{chenfrugalgpt},
which cascades prompts through models from cheapest to most
expensive and escalates only when a learned quality score is
insufficient, and \textbf{RouteLLM}~\cite{ongroutellm}, which
performs pre-generation routing by predicting the best model
per prompt. Neither addresses deployment configuration; we
therefore deploy each baseline in the conventional manner with
models isolated on dedicated GPUs, while \textsc{RouterWise} uses
the configuration chosen by its joint optimization. To isolate
the deployment contribution from the routing layer, we tune each
baseline's quality--cost parameter to match
\textsc{RouterWise}'s score (within $0.01$) at each
$(\lambda, \tau)$; the comparison thus reduces to TTFT at
equivalent routing quality.

\paragraph{Setup.}
We deploy on $G = 4$ A100 80GB GPUs and sweep offered load
$\lambda \in \{50, 60, 70, 80\}$ RPS under latency targets
$\tau \in \{500, 800\}$\,ms, reporting mean score and measured
average TTFT.

\paragraph{Results.}
Table~\ref{tab:study2_g4} shows the outcomes. Because scores are
matched by construction, the meaningful comparison is in TTFT.
\textsc{RouterWise} stays close to the SLO across all operating
points, meeting the target at $\lambda \in \{50, 60, 70\}$ and
overshooting by only $100$\,ms at $\lambda = 80$. Both baselines
overshoot far more, and the gap widens sharply with load.

\paragraph{Analysis.}
The two baselines fail for different but related reasons. 
FrugalGPT's cascade structure sends every prompt to the cheapest
model first and escalates a large fraction to larger models,
inverting the natural relationship between model size and serving
capacity; since the deployment is isolated and cannot adapt to
this skewed load, the cheapest model's queue saturates almost
immediately. RouteLLM avoids this cascade overhead by dispatching
each prompt directly to a single model, but isolated deployment
still leaves the most-used model as a bottleneck. In both cases,
the routing decisions themselves are of comparable quality to
\textsc{RouterWise}, while the latency gap is mainly attributable to deployment. This isolates the central claim of our paper:
when routing quality is held fixed, deployment-aware allocation
is what delivers SLO compliance.

%% file: conclusion.tex
\section{Conclusion}
\label{sec:conclusion}

We have argued that prompt routing cannot be studied in isolation from
deployment: when models share a GPU cluster, per-model latency is
not a fixed property but a joint function of resource allocation
and routing-induced load. Existing routers, trained from prompts
alone, miss this dependence and produce assignments that are
slow or infeasible under realistic deployment. We proposed
\textsc{RouterWise}, a deployment-aware router that jointly
selects a deployment configuration and a routing policy via a
dual-price formulation over profiled latency models.
Experiments on three open-source LLMs show that achievable
routing quality varies by up to $87\%$ across deployment
configurations on the same cluster. The optimal model for a prompt depends not only
on the prompt itself, but also on the deployment environment.

%% file: limitation.tex
\section{Acknowledgements}
We used OpenAI ChatGPT to assist with proofreading and improving the clarity and presentation of the manuscript.

\section{Limitations}
\label{sec:limitations}

The primary objective of this work is to demonstrate that
resource allocation and prompt routing are tightly coupled, and
to propose a formal framework for finding an optimal joint
solution under a latency SLO. Establishing this coupling, rather
than building a production-ready serving system, motivates
several assumptions that real deployments will not satisfy.

First, we assume the prompt distribution is fixed and known: the
router is trained on a representative sample, and the dual-price
optimization is computed against that sample. In practice, prompt
distributions shift over time due to workload evolution, domain
changes, and user behavior. Adapting routing and allocation
decisions online as the distribution drifts is an open question.

Second, our latency models are constructed offline from
profiling and treat each model in isolation. Runtime latency on shared clusters fluctuates due to co-location interference, workload burstiness, and infrastructure noise, and we observe modest gaps between profiled and deployed TTFT in our experiments (\S\ref{subsec:study2}). Although
\textsc{RouterWise} relies on the relative ranking of setups
rather than absolute predictions, incorporating online
calibration would tighten the planning loop and improve
robustness.

Third, we optimize against a fixed input load $\lambda$. Real
serving systems face fluctuating arrival rates, and a setup
selected for one load may be suboptimal under another.
Extending the formulation to multiple load regimes, or to
adaptive resource reallocation as load changes, is a natural
direction.

The assumptions above scope the contribution of this work
without limiting its central claim. Building robust serving
systems that operate well under distribution shift, latency
noise, and load variation is a substantial research agenda in
its own right; the framework presented here provides a
principled foundation for that work, and the central
observation, that the optimal model for a prompt depends on the
deployment environment, holds regardless of which extensions
are layered on top.

%% file: appendix.tex
\appendix

\section{Proof of Proposition~\ref{prop:dual_score}}
\label{appendix:dual_score}

In this appendix, we derive the dual-price formulation of the primal
score-maximization problem.

Recall the primal problem:
\begin{equation}
\label{eq:appendix_score_primal}
\hat{S}(w)
=
\frac{1}{N}
\max_{\{z_{ji}\}}
\sum_{j=1}^{N}\sum_{i=1}^{M} z_{ji}\, s_i(x_j)
\end{equation}
subject to
\begin{align}
z_{ji} &\in \{0,1\}, && \forall j,i, \label{eq:appendix_score_binary}\\
\sum_{i=1}^{M} z_{ji} &= 1, && \forall j, \label{eq:appendix_score_assign}\\
\sum_{j=1}^{N} z_{ji} &= c_i, && \forall i. \label{eq:appendix_score_count}
\end{align}
Here, $z_{ji}=1$ indicates that prompt $x_j$ is assigned to model $i$, and
$c_i$ is the target number of prompts assigned to model $i$.

We derive the dual by dualizing only the target-count constraints
\eqref{eq:appendix_score_count}. Let $\alpha_i \in \mathbb{R}$ denote the dual
variable associated with the constraint
\[
\sum_{j=1}^{N} z_{ji}=c_i.
\]
The corresponding Lagrangian is
\begin{align}
\label{eq:appendix_lagrangian}
\mathcal{J}(Z,\alpha)
&= \frac{1}{N}\sum_{j=1}^{N}\sum_{i=1}^{M} z_{ji}\, s_i(x_j) \notag \\
&\quad + \frac{1}{N}\sum_{i=1}^{M}\alpha_i \left(c_i-\sum_{j=1}^{N} z_{ji}\right).
\end{align}

Rearranging \eqref{eq:appendix_lagrangian} gives
\begin{align}
\label{eq:appendix_lagrangian_rearranged}
\mathcal{J}(Z,\alpha)
= \frac{1}{N}\Bigg[ &\sum_{j=1}^{N}\sum_{i=1}^{M} z_{ji}\bigl(s_i(x_j)-\alpha_i\bigr) \notag \\
&+ \sum_{i=1}^{M}\alpha_i c_i \Bigg].
\end{align}

For fixed $\alpha$, the term $\sum_{i=1}^{M}\alpha_i c_i$ is constant with
respect to the assignment variables. Therefore, maximizing
$\mathcal{J}(Z,\alpha)$ over $Z$ reduces to maximizing
\[
\sum_{j=1}^{N}\sum_{i=1}^{M}
z_{ji}\bigl(s_i(x_j)-\alpha_i\bigr)
\]
subject to the remaining constraints
\[
z_{ji}\in\{0,1\}, \qquad
\sum_{i=1}^{M} z_{ji}=1 \quad \forall j.
\]

These constraints couple variables only within each prompt $j$. Hence, the
maximization decomposes across prompts. For a fixed prompt $x_j$, assigning it
to model $i$ contributes
\[
s_i(x_j)-\alpha_i
\]
to the objective. Since each prompt must be assigned to exactly one model, the
optimal choice is
\begin{equation}
\label{eq:appendix_adjusted_score_rule}
i^*(x_j;\alpha)
=
\arg\max_{i\in\{1,\dots,M\}}
\bigl(s_i(x_j)-\alpha_i\bigr).
\end{equation}

Substituting this optimal prompt-wise assignment into
\eqref{eq:appendix_lagrangian_rearranged} yields the dual function
\begin{align}
g(\alpha)
&= \max_{Z}\mathcal{J}(Z,\alpha) \nonumber\\
&= \frac{1}{N}\Bigg[
\sum_{j=1}^{N} \max_{i\in\{1,\dots,M\}}\bigl(s_i(x_j)-\alpha_i\bigr) \nonumber\\
&\quad\;\, + \sum_{i=1}^{M}\alpha_i c_i
\Bigg].
\label{eq:appendix_dual_function}
\end{align}

Therefore, the dual problem is
\begin{equation}
\label{eq:appendix_dual_problem}
\min_{\alpha\in\mathbb{R}^M} g(\alpha),
\end{equation}
where $g(\alpha)$ is given by \eqref{eq:appendix_dual_function}.

This proves the dual-price characterization in
Proposition~\ref{prop:dual_score}. In addition,
\eqref{eq:appendix_adjusted_score_rule} shows that, for fixed dual prices
$\alpha$, the maximizing inner assignment is obtained independently for each
prompt by selecting the model with the largest adjusted score
$s_i(x_j)-\alpha_i$.

\section{Proof of Proposition~\ref{prop:dual_convex}}
\label{appendix:dual_convex}

In this appendix, we prove that the dual objective
\[
g(\alpha)
=
\frac{1}{N}
\left[
\sum_{j=1}^{N}
\max_{i\in\{1,\dots,M\}}
\bigl(s_i(x_j)-\alpha_i\bigr)
+
\sum_{i=1}^{M}\alpha_i c_i
\right]
\]
is convex in $\alpha \in \mathbb{R}^M$.

For each prompt $x_j$, define
\[
h_j(\alpha)
=
\max_{i\in\{1,\dots,M\}}
\bigl(s_i(x_j)-\alpha_i\bigr).
\]
Then $g(\alpha)$ can be written as
\[
g(\alpha)
=
\frac{1}{N}
\left[
\sum_{j=1}^{N} h_j(\alpha)
+
\sum_{i=1}^{M}\alpha_i c_i
\right].
\]

We first show that each function $h_j(\alpha)$ is convex. For any fixed model
$i$, the function
\[
h_{j,i}(\alpha)=s_i(x_j)-\alpha_i
\]
is affine in $\alpha$, since it is a constant $s_i(x_j)$ plus a linear term
$-\alpha_i$. The function $h_j(\alpha)$ is the pointwise maximum of the finite
collection of affine functions $\{h_{j,i}(\alpha)\}_{i=1}^{M}$. A pointwise
maximum of affine functions is convex. Therefore, each $h_j(\alpha)$ is convex.

Next, the term
\[
\sum_{i=1}^{M}\alpha_i c_i
\]
is linear in $\alpha$, and every linear function is both convex and concave.
Therefore, this term is convex.

Finally, $g(\alpha)$ is obtained by summing the convex functions
$\{h_j(\alpha)\}_{j=1}^{N}$ and the linear term $\sum_{i=1}^{M}\alpha_i c_i$,
and then scaling the result by the positive constant $1/N$. Since nonnegative
weighted sums of convex functions remain convex, it follows that $g(\alpha)$ is
convex in $\alpha$.

This proves Proposition~\ref{prop:dual_convex}.

\section{Proof of Proposition~\ref{prop:dual_gradient}}
\label{appendix:dual_gradient}

Let
\[
V(c)
=
\frac{1}{N}
\max_{\{z_{ji}\}}
\sum_{j=1}^{N}\sum_{i=1}^{M} z_{ji}\, s_i(x_j)
\]
denote the optimal value of the count-constrained score-maximization problem as
a function of the target count vector $c=(c_1,\dots,c_M)$. By
Proposition~\ref{prop:dual_score}, $V(c)$ admits the dual representation
\[
V(c)=\min_{\alpha\in\mathbb{R}^M} g(\alpha;c),
\]
where
\[
g(\alpha;c)
=
\frac{1}{N}
\left[
\sum_{j=1}^{N}\max_i\bigl(s_i(x_j)-\alpha_i\bigr)
+
\sum_{i=1}^{M}\alpha_i c_i
\right].
\]

Assume that $V(c)$ is differentiable at the current $c$, and let
$\alpha^\star(c)$ be an optimal dual solution. Since the dependence of
$g(\alpha;c)$ on $c$ appears only through the linear term
\[
\frac{1}{N}\sum_{i=1}^{M}\alpha_i c_i,
\]
the envelope theorem gives
\[
\frac{\partial V(c)}{\partial c_i}
=
\frac{\partial g(\alpha;c)}{\partial c_i}\Big|_{\alpha=\alpha^\star}
=
\frac{\alpha_i^\star}{N}.
\]

Now define $\hat S(w)=V(c)$ under the continuous approximation $c_i=Nw_i$.
Applying the chain rule,
\[
\frac{\partial \hat S(w)}{\partial w_i}
=
\sum_{k=1}^{M}
\frac{\partial V(c)}{\partial c_k}
\frac{\partial c_k}{\partial w_i}.
\]
Because $c_k=Nw_k$,
\[
\frac{\partial c_k}{\partial w_i}=N\,\mathbf{1}[k=i].
\]
Therefore,
\[
\frac{\partial \hat S(w)}{\partial w_i}
=
\frac{\alpha_i^\star}{N}\cdot N
=
\alpha_i^\star.
\]

Hence,
\[
\nabla_w \hat S(w)=\alpha^\star.
\]
This proves the proposition.

\section{Hyperparameters}
\label{app:hyperparameters}

We document the hyperparameter values used in our experiments
(\S\ref{sec:evaluation}). All values were chosen on a held-out
subset of RouterBench and held fixed across all reported runs.

\paragraph{Setup search space (\S\ref{subsection:procedure}).}
Tensor parallelism levels are drawn from $\{1, 2, 4\}$ and GPU
thread percentages from $\{10, 20, \dots, 100\}$. These values are also used to profile different model latencies.

\paragraph{Feasibility check.}
The first-fit-decreasing placement uses a memory slack tolerance
of $0.02$ to absorb minor profiling jitter. The compute-utilization
filter requires the aggregate thread sum to reach the cluster's compute budget ($\rho_{\min} = 1$).

\paragraph{Fraction optimizer (Algorithm~\ref{alg:optimize_fractions}).}
Projected gradient ascent on the simplex runs for $200$ iterations
with constant step size $\eta_t = 0.1$ (the step size in the
update rule in \S\ref{subsection:procedure}), initial Lagrange
multiplier $\beta_0 = 0.1$, momentum $0$, early-stopping patience
$10$, and objective tolerance $10^{-6}$.
\paragraph{Dual-price subgradient (\S\ref{subsection:dual_score}).}
The dual problem is solved with $300$ subgradient iterations,
initial step size $\eta_0 = 10^{-4}$, count tolerance $1$, and a
small tie-breaking noise of $10^{-9}$ added to scores to avoid
plateaus.

\paragraph{$\beta$ bisection (Algorithm~\ref{alg:optimize_beta}).}
The Lagrange multiplier $\beta$ is searched via bisection over
$[0, 5]$ for $80$ outer steps with multiplicative step decay
$0.99$.

%% file: custom.bib
@misc{anthropic2024,
  author = {Anthropic},
  title = {{Claude}},
  year = {2025},
  howpublished = {\url{https://www.anthropic.com}}
}

@misc{OpenAI2025,
  author = {OpenAI},
  title = {{ChatGPT}},
  year = {2025},
  howpublished = {\url{https://openai.com}}
}

@article{yang2025qwen3,
  title = {{Qwen3} Technical Report},
  author = {Yang, An and Li, Anfeng and Yang, Baosong and Zhang, Beichen and Hui, Binyuan and Zheng, Bo and Yu, Bowen and Gao, Chang and Huang, Chengen and Lv, Chenxu and others},
  journal = {arXiv preprint arXiv:2505.09388},
  year = {2025},
  url = {https://doi.org/10.48550/arXiv.2505.09388}
}

@article{Llama4_2025,
  author = {{Meta AI}},
  title = {{The Llama 4 Herd}: The Beginning of a New Era of Natively Multimodal {AI} Innovation},
  journal = {Meta AI Blog},
  year = {2025},
  month = {April},
  url = {https://ai.meta.com/blog/llama-4-multimodal-intelligence/},
  note = {Technical Release and Model Documentation}
}

@misc{Mistral3_2025,
  author = {{Mistral AI Team}},
  title = {Introducing {Mistral 3}},
  howpublished = {Mistral AI Blog},
  year = {2025},
  month = {December},
  url = {https://mistral.ai/news/mistral-3},
  note = {Official Model Release and Technical Documentation}
}

@inproceedings{gllm,
  title={gLLM: Global Balanced Pipeline Parallelism Systems for Distributed LLMs Serving with Token Throttling},
  author={Guo, Tianyu and Zhang, Xianwei and Du, Jiangsu and Chen, Zhiguang and Xiao, Nong and Lu, Yutong},
  booktitle={Proceedings of the International Conference for High Performance Computing, Networking, Storage and Analysis},
  pages={1725--1741},
  year={2025},
  url={https://doi.org/10.1145/3712285.3759823}
}

@inproceedings{hetis,
  title={Hetis: Serving llms in heterogeneous gpu clusters with fine-grained and dynamic parallelism},
  author={Mo, Zizhao and Liao, Jianxiong and Xu, Huanle and Zhou, Zhi and Xu, Chengzhong},
  booktitle={Proceedings of the International Conference for High Performance Computing, Networking, Storage and Analysis},
  pages={1710--1724},
  year={2025},
  url={https://doi.org/10.1145/3712285.3759784}
}

@inproceedings{ongroutellm,
  title = {{RouteLLM}: Learning to Route {LLMs} from Preference Data},
  author = {Ong, Isaac and Almahairi, Amjad and Wu, Vincent and Chiang, Wei-Lin and Wu, Tianhao and Gonzalez, Joseph E. and Kadous, M. Waleed and Stoica, Ion},
  booktitle = {Proceedings of the International Conference on Learning Representations},
  year = {2025},
  url={https://doi.org/10.48550/arXiv.2406.18665}
}

@article{chenfrugalgpt,
  title = {{FrugalGPT}: How to Use Large Language Models While Reducing Cost and Improving Performance},
  author = {Chen, Lingjiao and Zaharia, Matei and Zou, James},
  journal = {Transactions on Machine Learning Research},
  year = {2023},
  url={https://doi.org/10.48550/arXiv.2305.05176}
}

@inproceedings{zooter,
  title = {Routing to the Expert: Efficient Reward-Guided Ensemble of Large Language Models},
  author = {Lu, Keming and Yuan, Hongyi and Lin, Runji and Lin, Junyang and Yuan, Zheng and Zhou, Chang and Zhou, Jingren},
  booktitle = {Proceedings of the 2024 Conference of the North American Chapter of the Association for Computational Linguistics: Human Language Technologies},
  pages = {1964--1974},
  year = {2024},
  url = {https://doi.org/10.18653/v1/2024.naacl-long.109}
}

@inproceedings{hybridllm,
  title = {{Hybrid LLM}: Cost-Efficient and Quality-Aware Query Routing},
  author = {Ding, Dujian and Mallick, Ankur and Wang, Chi and Sim, Robert and Mukherjee, Subhabrata and R{\"u}hle, Victor and Lakshmanan, Laks V. S. and Awadallah, Ahmed Hassan},
  booktitle = {Proceedings of the International Conference on Learning Representations},
  year = {2024},
  url={https://doi.org/10.48550/arXiv.2404.14618}
}

@inproceedings{best-route,
  title = {{BEST-Route}: Adaptive {LLM} Routing with Test-Time Optimal Compute},
  author = {Ding, Dujian and Mallick, Ankur and Zhang, Shaokun and Wang, Chi and Madrigal, Daniel and Garcia, Mirian Del Carmen Hipolito and Xia, Menglin and Lakshmanan, Laks V. S. and Wu, Qingyun and R{\"u}hle, Victor},
  booktitle = {Proceedings of the International Conference on Machine Learning},
  year = {2025},
  url = {https://doi.org/10.48550/arXiv.2506.22716}
}

@inproceedings{confidence,
  title = {Learning to Route {LLMs} with Confidence Tokens},
  author = {Chuang, Yu-Neng and Sarma, Prathusha Kameswara and Gopalan, Parikshit and Boccio, John and Bolouki, Sara and Hu, Xia and Zhou, Helen},
  booktitle = {Proceedings of the International Conference on Machine Learning},
  year = {2025},
  url = {https://doi.org/10.48550/arXiv.2410.13284}
}

@article{automix,
  title = {{AutoMix}: Automatically Mixing Language Models},
  author = {Aggarwal, Pranjal and Madaan, Aman and Anand, Ankit and Potharaju, Srividya Pranavi and Mishra, Swaroop and Zhou, Pei and Gupta, Aditya and Rajagopal, Dheeraj and Kappaganthu, Karthik and Yang, Yiming and others},
  journal = {Advances in Neural Information Processing Systems},
  year = {2023},
  url = {https://doi.org/10.48550/arXiv.2310.12963}
}

@article{he2020deberta,
  title = {{DeBERTa}: Decoding-Enhanced {BERT} with Disentangled Attention},
  author = {He, Pengcheng and Liu, Xiaodong and Gao, Jianfeng and Chen, Weizhu},
  journal = {arXiv preprint arXiv:2006.03654},
  year = {2020} , 
  url = {https://doi.org/10.48550/arXiv.2006.03654}
}

@article{kasnavieh2026introlm,
  title={IntroLM: Introspective Language Models via Prefilling-Time Self-Evaluation},
  author={Kasnavieh, Hossein Hosseini and Haffari, Gholamreza and Leckie, Chris and Toosi, Adel N},
  journal={arXiv preprint arXiv:2601.03511},
  year={2026},
  url={https://doi.org/10.48550/arXiv.2601.03511}
}

@inproceedings{xiang2025aegaeon,
  title={Aegaeon: Effective GPU pooling for concurrent LLM serving on the market},
  author={Xiang, Yuxing and Li, Xue and Qian, Kun and Yang, Yufan and Zhu, Diwen and Yu, Wenyuan and Zhai, Ennan and Liu, Xuanzhe and Jin, Xin and Zhou, Jingren},
  booktitle={Proceedings of the ACM SIGOPS 31st Symposium on Operating Systems Principles},
  pages={1030--1045},
  year={2025}
}

@manual{nvidia_mps,
  title        = {NVIDIA CUDA Multi-Process Service},
  author       = {{NVIDIA}},
  year         = {2025},
  note         = {NVIDIA Deployment Guide, Release r590},
  url          = {https://docs.nvidia.com/deploy/mps/index.html}
}

@inproceedings{
hu2024routerbench,
title={RouterBench: A Benchmark for Multi-{LLM} Routing System},
author={Qitian Jason Hu and Jacob Bieker and Xiuyu Li and Nan Jiang and Benjamin Keigwin and Gaurav Ranganath and Kurt Keutzer and Shriyash Kaustubh Upadhyay},
booktitle={Agentic Markets Workshop at ICML 2024},
year={2024},
url={https://doi.org/10.48550/arXiv.2403.12031}
}

@inproceedings{duchi2008efficient,
author = {Duchi, John and Shalev-Shwartz, Shai and Singer, Yoram and Chandra, Tushar},
title = {Efficient projections onto the l1-ball for learning in high dimensions},
year = {2008},
isbn = {9781605582054},
publisher = {Association for Computing Machinery},
address = {New York, NY, USA},
url = {https://doi.org/10.1145/1390156.1390191},
doi = {10.1145/1390156.1390191},
booktitle = {Proceedings of the 25th International Conference on Machine Learning},
pages = {272–279},
numpages = {8},
location = {Helsinki, Finland},
series = {ICML '08}
}

@article{xu2023wizardlm,
  title   = {WizardLM: Empowering Large Language Models to Follow Complex Instructions},
  author  = {Xu, Can and Sun, Qingfeng and Zheng, Kai and Geng, Xiubo and Zhao, Pengcheng and Feng, Chong and Tao, Dacheng},
  journal = {arXiv preprint arXiv:2304.12244},
  year    = {2023},
  doi     = {10.48550/arXiv.2304.12244},
  url     = {https://doi.org/10.48550/arXiv.2304.12244}
}

@article{young2024yi,
  title   = {Yi: Open Foundation Models by 01.AI},
  author  = {Young, Alex and Chen, Bei and Li, Biao and Cai, Chong and Cao, Daya and Ge, Guodong and Li, Heng and Lin, Hongwei and Ning, Nanyun and Qu, Jiahao and Rajasekaran, Sivakumar and others},
  journal = {arXiv preprint arXiv:2403.04652},
  year    = {2024},
  doi     = {10.48550/arXiv.2403.04652},
  url     = {https://doi.org/10.48550/arXiv.2403.04652}
}

@inproceedings{vllm,
author = {Kwon, Woosuk and Li, Zhuohan and Zhuang, Siyuan and Sheng, Ying and Zheng, Lianmin and Yu, Cody Hao and Gonzalez, Joseph and Zhang, Hao and Stoica, Ion},
title = {Efficient Memory Management for Large Language Model Serving with PagedAttention},
year = {2023},
isbn = {9798400702297},
publisher = {Association for Computing Machinery},
address = {New York, NY, USA},
url = {https://doi.org/10.1145/3600006.3613165},
doi = {10.1145/3600006.3613165},
booktitle = {Proceedings of the 29th Symposium on Operating Systems Principles},
pages = {611–626},
numpages = {16},
location = {Koblenz, Germany},
series = {SOSP '23}
}

@inproceedings{muxserve,
author = {Duan, Jiangfei and Lu, Runyu and Duanmu, Haojie and Li, Xiuhong and Zhang, Xingcheng and Lin, Dahua and Stoica, Ion and Zhang, Hao},
title = {MuxServe: flexible spatial-temporal multiplexing for multiple LLM serving},
year = {2024},
publisher = {JMLR.org},
booktitle = {Proceedings of the 41st International Conference on Machine Learning},
articleno = {473},
numpages = {13},
location = {Vienna, Austria},
series = {ICML'24}
}

@article{lagrangian,
author = {Rush, Alexander M. and Collins, Michael},
title = {A tutorial on dual decomposition and lagrangian relaxation for inference in natural language processing},
year = {2012},
issue_date = {September 2012},
publisher = {AI Access Foundation},
address = {El Segundo, CA, USA},
volume = {45},
number = {1},
issn = {1076-9757},
journal = {J. Artif. Int. Res.},
month = sep,
pages = {305–362},
numpages = {58}
}
